\newcommand\BibTeX{{\rmfamily B\kern-.05em \textsc{i\kern-.025em b}\kern-.08em
T\kern-.1667em\lower.7ex\hbox{E}\kern-.125emX}}
\newcommand{\R}{{\rm I\!R}}
\newcommand{\dfb}{\stackrel{\Delta}{=}}
\newtheorem{theorem}{\textbf{Theorem}}[section]
\newtheorem{definition}[theorem]{\textbf{Definition}}
\newtheorem{remark}[theorem]{Remark}
\title{\LARGE \bf
Distributed scaling control of rigid formations
}
\author{Hector Garcia de Marina, Bayu Jayawardhana and Ming Cao
\thanks{Hector Garcia de Marina is with the Ecole de la Aviation Civile, Toulouse, France. Bayu Jayawardhana and Ming Cao are with the Engineering and Technology Institute of Groningen, University of Groningen, 9747 AG Groningen, The Netherlands. (e-mail: \{h.j.de.marina, b.jayawardhana, m.cao\}@rug.nl). This work was supported by the the EU INTERREG program under the auspices of the SMARTBOT project and the work of Cao was also supported by the European Research Council (ERC-StG-307207) and the Dutch Technology Foundation (STW-vidi-14134).}}%
\begin{document}

\maketitle
\thispagestyle{empty}
\pagestyle{empty}

\begin{abstract}
Recently it has been reported that biased range-measurements among neighboring agents in the gradient distance-based formation control can lead to predictable collective motion. In this paper we take advantage of this effect and by introducing distributed parameters to the prescribed inter-distances we are able to manipulate the steady-state motion of the formation. This manipulation is in the form of inducing simultaneously the combination of constant translational and angular velocities and a controlled scaling of the rigid formation. While the computation of the distributed parameters for the translational and angular velocities is based on the well-known graph rigidity theory, the parameters responsible for the scaling are based on some recent findings in bearing rigidity theory. We carry out the stability analysis of the modified gradient system and simulations in order to validate the main result.
\end{abstract}

\section{INTRODUCTION}
The use of teams of autonomous agents has attracted a lot of interest in recent years. This is due to the fact that in many tasks, such as the transportation of objects or area exploration \& surveillance, robotic teams can effectively accomplish tasks with robustness against uncertain environment and offer new functionalities, e.g. enhanced sensing instrumentation \cite{sheng2006distributed}. One of the key tasks in coordinating a team of agents is the formation and motion control, where the former refers to keeping a prescribed shape while the later refers to the steering of it. In particular, a very active topic regarding formation control is the {\it{distance-based}} control for rigid shapes, where the combination of potential-gradient control and rigidity graph theory allows us to achieve (locally) a prescribed shape by only controlling the inter-distances between neighboring agents \cite{KrBrFr08, AnYuFiHe08}. It is a very appealing approach since the agents can work with only local information, such as their own frame of coordinates and the relative positions of their neighbors. In addition, the equilibrium at the prescribed shape is exponentially stable and it can be made robust against sensor's biases \cite{cao2007controlling,SuLiAn15,MarCaoJa15}. 

In this paper we propose a novel distributed control algorithm for achieving the following three tasks simultaneously:
\begin{enumerate}
	\item[i)] Formation \emph{scale-free} shape keeping.
	\item[ii)] Steering the scale-free formation as a whole with the combination of a constant translation velocity and a constant angular velocity applied at its centroid.
	\item[iii)] Precise scaling of the formation, i.e. controlling precisely the rate of growing or shrinking between two desired scaled shapes. The proposed control law even allows the changing between two different shapes.
\end{enumerate}

The findings of our work employ the recent results in \cite{zhao2014bearing} on {\it{bearing rigidity theory}}. Roughly speaking, bearing rigidity theory is employed for controlling a shape instead of focusing on maintaining constant distances or positions between neighbors, so one is interested in maintaining constant inner angles of the shape which can be obtained from the unit vectors between neighbors of a {\it{scale-free rigid shape}}. In fact these findings have been recently employed to control the translational motion of a rigid formation with a precise scaling rate in \cite{zhao2015translational}. 
The approach presented in this paper has several advantages over \cite{zhao2015translational}. Firstly, it does not require a common frame of coordinates for the agents. Secondly it is estimator free and it does not require global information such as the position of the centroid and its desired velocity. Lastly, the distance-based approach also allows rotational motion, a feature lost in the position-based control since the steady-state orientation is globally fixed by design.

The strategy employed in this paper is based on assigning motion parameters to the prescribed distances of a desired rigid formation. It has been reported in \cite{MouMorseBelSunAnd15} that when two neighboring agents differ in the prescribed distance to maintain, collective motion of the formation occurs. More precisely, the formation converges to a distorted version of the desired rigid shape and at the same time it undergoes a constant translation together with a rotation about its centroid as has been described in detail in \cite{sun2014formation}. It has been shown in \cite{MaJaCa15} that if these mismatches in the prescribed distance are taken as distributed motion parameters, one can maintain a desired rigid shape and at the same time control precisely a combination of a constant translation of the formation with a constant rotation about its centroid. By unifying the aforementioned results on motion control employing distributed motion parameters and bearing rigidity theory, one can control simultaneously the motion of the prescribed rigid shape and its scale in a precise way, i.e. not distorting a scale-free shape for a desired rate of growing/shrinking. 

The rest of the paper is organized as follows. In Section \ref{sec: pre} we introduce the notation and background for bearing rigid formations. Section \ref{sec: mis} explains the design of the motion controller with precise scaling/morphing of the formation by introducing changing-motion parameters in a distance-based controller. In Section \ref{sec: sta} we demonstrate the exponential convergence of our proposed algorithm. Numerical simulations validate the main results in this paper in Section \ref{sec: exp}.

\section{PRELIMINARIES}
\label{sec: pre}
We start by introducing some notation employed throughout the paper. For a given matrix $A\in\R^{n\times p}$, define $\overline A \dfb A \otimes I_m \in\R^{nm\times pm}$, where the symbol $\otimes$ denotes the Kronecker product, with $m = 2$ for the 2D formation case or $m=3$ for the 3D case, and $I_m$ is the $m$-dimensional identity matrix. For a stacked vector/matrix $x\dfb \begin{bmatrix}x_1^T & x_2^T & \dots & x_k^T\end{bmatrix}^T$ with $x_i\in\R^{n\times l}, i\in\{1,\dots,k\}$, we define the block diagonal matrix $D_x \dfb \operatorname{diag}\{x_i\}_{i\in\{1,\dots,k\}} \in\R^{kn\times kl}$. We denote by $|\mathcal{X}|$ the cardinality of the set $\mathcal{X}$, by $||x||$ the Euclidean norm of a vector $x$ and by $\hat x=\frac{x}{||x||}$ the unit vector of a non-zero $x$. We define the orthogonal projector operator as $P^{\perp}_x \dfb \left(I_m - \hat x_k\hat x_k^T\right)$ or more generally $P^{\perp}_\mathcal{X}$ over an orthogonal subspace of $\mathcal{X}$. Finally we use $\mathbf{1}_{n\times m}$ and $\mathbf{0}_{n\times m}$ to denote the all-one and all-zero matrix in $\R^{n\times m}$ respectively and will drop the subscript if the dimensions are clear from the context.
\subsection{Graphs and rigidity theory}
We consider a formation of $n\geq 2$ agents whose positions are denoted by $p_i\in\R^m$ for $i\in\{1,\dots,n\}$. The agents can measure their range and directions with respect to their neighbors. The representation of this sensing topology is given by an undirected graph $\mathbb{G} = (\mathcal{V}, \mathcal{E})$ with the vertex set $\mathcal{V} = \{1, \dots, n\}$ and the ordered edge set $\mathcal{E}\subseteq\mathcal{V}\times\mathcal{V}$. The set $\mathcal{N}_i$ of the neighbors of agent $i$ is defined by $\mathcal{N}_i\dfb\{j\in\mathcal{V}:(i,j)\in\mathcal{E}\}$. We define the elements of the incidence matrix $B\in\R^{|\mathcal{V}|\times|\mathcal{E}|}$ for $\mathbb{G}$ by
\begin{equation*}
	b_{ik} \dfb \begin{cases}+1 \quad \text{if} \quad i = {\mathcal{E}_k^{\text{tail}}} \\
		-1 \quad \text{if} \quad i = {\mathcal{E}_k^{\text{head}}} \\
		0 \quad \text{otherwise,}
	\end{cases}
\end{equation*}
where $\mathcal{E}_k^{\text{tail}}$ and $\mathcal{E}_k^{\text{head}}$ denote the tail and head nodes, respectively, of the edge $\mathcal{E}_k$, i.e. $\mathcal{E}_k = (\mathcal{E}_k^{\text{tail}},\mathcal{E}_k^{\text{head}})$. Since the graph $\mathbb{G}$ is undirected, it is irrelevant how the directions of the edges are defined in $B$. 

A \emph{framework} is defined by the pair $(\mathbb{G}, p)$, where $p = \begin{bmatrix}p_1^T & \dots & p_n^T\end{bmatrix}^T$ is the stacked vector of the agents' positions. The available relative positions of the agents in the framework are given by the following stacked vector
\begin{equation*}
	z = \overline B^Tp,
\end{equation*}
where each vector $z_k = p_i - p_j$ in $z$ corresponds to the relative position associated with the edge $\mathcal{E}_k = (i, j)$.

Let us now briefly recall the notions of \emph{distance} infinitesimally rigid framework and minimally rigid framework from  \cite{AnYuFiHe08}. Define the edge function $f_\mathbb{G}$ by $f_{\mathbb{G}}(p) = \mathop{\text{col}}\limits_{k}\big(\|z_k\|^2\big)$ where the operator $\text{col}$ defines the stacked column vector and we denote its Jacobian, also known as the \emph{rigidity matrix}, by $R(z) = D_z^T\overline B^T$. A framework $(\mathbb{G}, p)$ is {\it infinitesimally rigid} if $\text{rank} R(z) = 2n-3$ when it is embedded in $\mathbb{R}^2$ or if $\text{rank} R(z) = 3n-6$ when it is embedded in $\mathbb{R}^3$. Additionally, if $|\mathcal E|=2n-3$ in the 2D case or $|\mathcal E|=3n-6$ in the 3D case then the framework is called {\it minimally rigid}. Roughly speaking, the only motions that one can perform over the agents in a minimally rigid framework, while they are already in the desired shape, are the ones defining translations and rotations of the whole shape. 

The stacked vector of relative positions $z^* = [\begin{smallmatrix}{z_1^*}^T & {z_2^*}^T & \dots & {z_{|\mathcal{E}|}^*}^T\end{smallmatrix}]^T$ defines a desired infinitesimally and minimally rigid shape with $||z_k^*|| = d_k$ for all $k\in\{1,\dots,|\mathcal{E}|\}$ where $d_k$ is the desired inter-distance. The resulting set $\mathcal{Z}$ of the possible formations with the same shape is defined by
\begin{equation}
	\mathcal{Z} \dfb \left \{\left(I_{|\mathcal{E}|} \otimes \mathcal{R}\right)z^* \right \}, \label{eq: Z}
\end{equation}
where $\mathcal{R}$ is the set of all rotational matrices in 2D or 3D. Roughly speaking, $\mathcal{Z}$ consists of all formation positions that are obtained by rotating $z^*$. 

Consider a scale-free shape \emph{based on} an infinitesimally and minimally rigid shape, for example the collection of all regular squares with an internal diagonal. It is obvious that this collection can be distinguished from other (infinitesimally and minimally rigid) scale-free shapes by looking at its inner angles or equivalently by looking at all the scalar products $\hat z_l^T\hat z_n$ where $l$ and $n$ are two edges sharing a node. This fact has been explained in more detail in \cite{6862558}. Bearing-based rigid frameworks are related to the distance-based ones where the bearing-based shape can be defined by the inner angles, instead of the distances. Let us review some basic concepts in bearing rigidity.
\begin{definition}\cite{zhao2014bearing}
	Frameworks $(\mathcal{G}, p)$ and $(\mathcal{G}, p')$ are \emph{bearing equivalent} if $P^{\perp}_{z_k} z_k' = 0$ for all $k\in\{1,\dots,|\mathcal{E}|\}$.
\end{definition}
\begin{definition}\cite{zhao2014bearing}
	Frameworks $(\mathcal{G}, p)$ and $(\mathcal{G}, p')$ are \emph{bearing congruent} if $P^{\perp}_{(p_i - p_j)} (p_i' - p_j') = 0$ for all $i, j\in \mathcal{V}$.
\end{definition}
\begin{definition}
	\label{def: bf}
	\cite{zhao2014bearing} The \emph{bearing function} is defined by $f_{B_\mathbb{G}}(p) \dfb \hat z \in\R^{m|\mathcal{E}|}$, where\footnote{In order not to overload the notation, here by $\hat z$ we mean exclusively the vector-element wise normalization of $z$.} $\hat z$ is the stacked vector of $\hat z_k$ for all $k\in\{1,\dots,|\mathcal{E}|\}$.
\end{definition}
Similar to the rigidity matrix one can define the \emph{bearing rigidity matrix} by computing the Jacobian matrix of the bearing function
\begin{equation}
	R_B(z) = \frac{\partial f_{B_\mathbb{G}}(p)}{\partial p} = \overline D^T_{\tilde z} D^T_{P^{\perp}_{\hat z}}\overline B^T, \nonumber
\end{equation}
where $P^{\perp}_{\hat z}\in\R^{m|\mathcal{E}|\times m}$ is the stacked matrix of operators $P^{\perp}_{\hat z_k}$ and $\tilde z\in\R^{|\mathcal{E}|}$ is the stacked vector of $\frac{1}{||z_k||}$ for all $k\in\{1,\dots,|\mathcal{E}|\}$. The non-trivial kernel of $R_B(z)$ includes the scalings and translations of the framework \cite{zhao2014bearing}, leading to the following definition
\begin{definition}
	\cite{zhao2014bearing} A framework is \emph{infinitesimally bearing rigid} if the kernel of its bearing rigidity matrix only includes scalings and translations.
\end{definition}
In order words, if a scale-free shape can be determined uniquely by its inner angles, then it belongs to the \emph{infinitesimally bearing rigid} framework. 

Consider a given shape defined by $\mathcal{Z}$, we define the scale-free $\mathcal{Z}_\mathcal{S}$ by taking $\mathcal{Z}$ rescaled by all the possible scale factors $s\in\R^+$ such that $||z_k|| = sd_k$ for all $k\in\{1,\dots,|\mathcal{E}|\}$. This leads to the following definition

\begin{definition}
	The shapes defined by $\mathcal{Z}$ within the set $\mathcal{Z}_\mathcal{S}$ are {\it{infinitesimally and minimally congruent}} rigid.
\end{definition}

The name comes from the fact that all the scales of an infinitesimally and minimally rigid shape are bearing congruent. 

\subsection{Frames of coordinates}
In order to describe and design motions for the desired scale-free formation defined by $\mathcal{Z}_\mathcal{S}$, it will be useful to attach a frame of coordinates to the centroid of the shape. We denote by $O_g$ the \emph{global frame} of coordinates fixed at the origin of $\R^m$ with some arbitrary fixed orientation. In a similar way, we denote by $O_b$ the \emph{body frame} fixed at the centroid $p_c$ of the desired scale-free rigid formation. Furthermore, if we rotate the scale-free rigid formation with respect to $O_g$, then $O_b$ is also rotated in the same manner. Note that $p_c$ is invariant with respect to $\mathcal{Z}_\mathcal{S}$. Let $^bp_i$ denote the position of agent $i$ with respect to $O_b$. In order to simplify notation, whenever we represent an agent's variable with respect to $O_g$, the superscript is omitted, e.g. $p_i \dfb {^gp_i}$.

\section{Motion and scaling of rigid formations}
\label{sec: mis}
We consider the $n$ agents in the framework $(\mathbb{G}, p)$ to be governed by single integrator dynamics
\begin{equation}
	\dot p_i = u_i, \label{eq: pdyn}
\end{equation}
where $u_i\in\R^m$ is the control action for all $i\in\{1, \dots, |\mathcal{V}|\}$. 
For each edge $\mathcal{E}_k$ in the framework one can associate a potential function $V_k(z_k)$ whose minimum corresponds to the desired configuration of the associated edge, for example, in order to (locally) stabilize $\mathcal{Z}$ we can employ the classical \emph{elastic potential function} from physics for controlling the length of the edges
\begin{equation}
	V(p) = \sum_{i=1}^{|\mathcal{E}|}V_k(z_k) = \frac{1}{2}\sum_{i=1}^{|\mathcal{E}|}(||z_k|| - d_k)^2.  \label{eq: vkd}
\end{equation}

It has been reported in \cite{MouMorseBelSunAnd15} that in undirected gradient-based controlled formations if at least two neighboring agents differ about the prescribed distance to maintain, i.e. they have a mismatch, then a steady-state collective motion with a distorted shape occurs. The collective motion, illustrated in Figure \ref{fig: 3dmotion}, is described by the combination of two constant velocity vectors: 
\begin{itemize}
	\item A linear velocity $^bv_{p_c}^*$ of the centroid with respect to the steady-state distorted shape.
	\item An angular velocity $^b\omega^*$ that rotates the steady-state distorted shape.
\end{itemize}
\begin{figure}
\centering
\begin{tikzpicture}[line join=round]
[\tikzset{>=latex}]\filldraw[fill=white](-2.353,1.014)--(-1.808,-.927)--(2.955,-.705)--(2.41,1.236)--cycle;
\filldraw[draw=black,fill=white,fill opacity=0.8](.55,.197)--(.958,.598)--(.958,.598)--(1.638,.005)--cycle;
\filldraw[draw=black,fill=white,fill opacity=0.8](.685,-.283)--(.958,.598)--(.958,.598)--(.55,.197)--cycle;
\draw[draw=red,arrows=->](2.381,.111)--(2.381,1.112);
\draw[draw=black,arrows=<->,thick](-.368,-.243)--(-.942,.059)--(-.569,.51);
\draw[arrows=-,thick](.992,.009)--(.958,.129)--(1.468,.153);
\draw[draw=black,arrows=-](.958,.129)--(.958,.598);
\draw[draw=black,arrows=-](.958,.129)--(1.148,.279);
\draw[arrows=-,thick](.958,.129)--(.958,.598);
\draw[draw=black,arrows=-](.958,.129)--(1.309,.01);
\filldraw[draw=black,fill=white,fill opacity=0.8](1.638,.005)--(.958,.598)--(.958,.598)--(.685,-.283)--cycle;
\draw[arrows=->,thick](1.468,.153)--(2.115,.183);
\draw[draw=black,arrows=->](1.309,.01)--(1.847,-.172);
\draw[draw=black,arrows=-](1.148,.279)--(1.378,.46);
\draw[thick](1.162,1.08)--(1.167,1.082)--(1.171,1.083)--(1.176,1.085)--(1.18,1.086)--(1.185,1.088)--(1.189,1.09)--(1.194,1.091)--(1.198,1.093)--(1.202,1.095)--(1.206,1.097)--(1.211,1.098)--(1.215,1.1)--(1.218,1.102)--(1.222,1.104)--(1.226,1.106)--(1.23,1.108)--(1.233,1.11)--(1.237,1.112)--(1.24,1.114)--(1.244,1.116)--(1.247,1.118)--(1.25,1.12)--(1.253,1.122)--(1.256,1.124)--(1.259,1.126)--(1.262,1.128)--(1.264,1.13)--(1.267,1.133)--(1.27,1.135)--(1.272,1.137)--(1.274,1.139)--(1.277,1.141)--(1.279,1.144)--(1.281,1.146)--(1.283,1.148)--(1.285,1.151)--(1.286,1.153)--(1.288,1.155)--(1.29,1.158)--(1.291,1.16)--(1.292,1.162)--(1.294,1.165)--(1.295,1.167)--(1.296,1.17)--(1.297,1.172)--(1.297,1.174)--(1.298,1.177)--(1.299,1.179)--(1.299,1.182)--(1.3,1.184)--(1.3,1.187)--(1.3,1.189)--(1.3,1.191)--(1.3,1.194)--(1.3,1.196)--(1.3,1.199)--(1.3,1.201)--(1.299,1.204)--(1.299,1.206)--(1.298,1.208)--(1.297,1.211)--(1.297,1.213)--(1.296,1.216)--(1.295,1.218)--(1.293,1.22)--(1.292,1.223)--(1.291,1.225)--(1.289,1.228)--(1.288,1.23)--(1.286,1.232)--(1.284,1.235)--(1.283,1.237)--(1.281,1.239)--(1.279,1.242)--(1.277,1.244)--(1.274,1.246)--(1.272,1.248)--(1.269,1.251)--(1.267,1.253)--(1.264,1.255)--(1.262,1.257)--(1.259,1.259)--(1.256,1.261)--(1.253,1.263)--(1.25,1.266)--(1.247,1.268)--(1.243,1.27)--(1.24,1.272)--(1.237,1.274)--(1.233,1.276)--(1.23,1.278)--(1.226,1.28)--(1.222,1.281)--(1.218,1.283)--(1.214,1.285)--(1.21,1.287)--(1.206,1.289)--(1.202,1.29)--(1.198,1.292)--(1.194,1.294)--(1.189,1.296)--(1.185,1.297)--(1.18,1.299)--(1.176,1.3)--(1.171,1.302)--(1.166,1.303)--(1.162,1.305)--(1.157,1.306)--(1.152,1.308)--(1.147,1.309)--(1.142,1.31)--(1.137,1.312)--(1.132,1.313)--(1.126,1.314)--(1.121,1.315)--(1.116,1.316)--(1.111,1.318)--(1.105,1.319)--(1.1,1.32)--(1.094,1.321)--(1.089,1.322)--(1.083,1.322)--(1.078,1.323)--(1.072,1.324)--(1.066,1.325)--(1.061,1.326)--(1.055,1.326)--(1.049,1.327)--(1.044,1.328)--(1.038,1.328)--(1.032,1.329)--(1.026,1.329)--(1.02,1.33)--(1.014,1.33)--(1.008,1.331)--(1.002,1.331)--(.997,1.331)--(.991,1.332)--(.985,1.332)--(.979,1.332)--(.973,1.332)--(.967,1.332)--(.961,1.332)--(.955,1.332)--(.949,1.332)--(.943,1.332)--(.937,1.332)--(.931,1.332)--(.925,1.332)--(.919,1.331)--(.913,1.331)--(.907,1.331)--(.901,1.33)--(.895,1.33)--(.89,1.329)--(.884,1.329)--(.878,1.328)--(.872,1.328)--(.866,1.327)--(.861,1.326)--(.855,1.326)--(.849,1.325)--(.843,1.324)--(.838,1.323)--(.832,1.322)--(.827,1.322)--(.821,1.321)--(.816,1.32)--(.81,1.319)--(.805,1.317)--(.8,1.316)--(.794,1.315)--(.789,1.314)--(.784,1.313)--(.779,1.312)--(.774,1.31)--(.769,1.309)--(.764,1.308)--(.759,1.306)--(.754,1.305)--(.749,1.303)--(.745,1.302)--(.74,1.3)--(.735,1.299)--(.731,1.297)--(.726,1.295)--(.722,1.294)--(.718,1.292)--(.714,1.29)--(.709,1.289)--(.705,1.287)--(.701,1.285)--(.697,1.283)--(.694,1.281)--(.69,1.279)--(.686,1.277)--(.683,1.276)--(.679,1.274)--(.676,1.272)--(.672,1.27)--(.669,1.268)--(.666,1.265)--(.663,1.263)--(.66,1.261)--(.657,1.259)--(.654,1.257)--(.651,1.255)--(.649,1.253)--(.646,1.25)--(.644,1.248)--(.642,1.246)--(.639,1.244)--(.637,1.241)--(.635,1.239)--(.633,1.237);
\draw[draw=black,dashed](1.847,.829)--(1.847,-.172);
\draw[arrows=->,thick](-.942,.059)--(-.932,.476);
\draw[draw=black,arrows=->](.958,.598)--(.958,.942);
\draw[arrows=->,thick](.958,.598)--(.958,1.193);
\draw[arrows=<-,thick](1.09,-.342)--(.992,.009);
\draw[draw=black,arrows=->](1.378,.46)--(1.847,.829);
\draw[draw=black,dashed](1.847,.829)--(.958,.942);
\draw[thick,->](1.206,1.097)--(1.055,1.059);
\node at (-1.233,.288) {$O_g$};\node at (.428,.484) {$O_b$};\node at (.958,1.568) {$^b\omega^*$};\node at (2.012,.878) {$^bv^*_{p_c}$};\draw[red,->] (.958,.129) to [bend left=45] (1.631,-.897);\end{tikzpicture}
	\caption{The resultant motion of the tetrahedron is the composition of the two constant velocities $^bv_{p_c}^*$ and $^b\omega^*$. The described trajectory has been split in two red curves.} 
\label{fig: 3dmotion}
\end{figure}
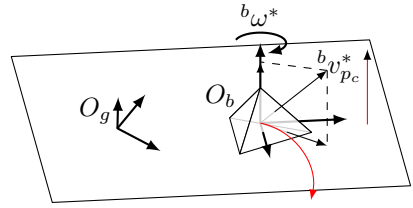

It has been shown in \cite{MaJaCa15} that if we replace mismatches by distributed motion parameters, then we can control both, a non-distorted desired shape and a desired motion of the formation with respect to $O_b$. Throughout this section we will show that such an approach can also be employed to scale the formation shape precisely over time while simultaneously to guide the formation to travel with a desired $^bv_{p_c}^*$ and $^b\omega^*$. This approach is easier and more effective in several aspects than the one presented in \cite{zhao2015translational}, since we do not need estimators for traveling at a constant speed and we can also rotate the desired formation with respect to $O_g$ by controlling only one agent. Moreover, using our proposed approach, the agents can use only local coordinates since we are using the distance-based control strategy.

We introduce the motion and changing parameters to the gradient-based control and show the steady-state motion, including the scaling of the shape $\mathcal{Z}$, is related to its unit vectors. The control inputs derived from the gradient of the distance-based potential (\ref{eq: vkd}) for the agents $i$ and $j$ on the edge $\mathcal{E}_k = (i, j)$ are as follows
\begin{equation}
\left.
\begin{array}{ll} u_i^k &= -\hat z_k\big(||z_k|| - d_k\big) \\
		u_j^k &= \hat z_k\big(||z_k|| - d_k\big),
\end{array}\right\}
\end{equation}
where the superscript $k$ denotes the contribution of the edge $k$ to the total control input $u_i$ and $u_j$. Introduce a pair of parameters $\mu_k$ and $\tilde\mu_k$ to the prescribed distance $d_k$ as follows
\begin{equation}
\left.
\begin{array}{ll}
	u_i^k = -\hat z_k\big(||z_k|| - d_k - \mu_k\big) &= -\hat z_k\big(||z_k|| - d_k) +  \hat z_k\mu_k\\
	u_j^k = \hat z_k\big(||z_k|| - d_k + \tilde\mu_k\big) &= \hat z_k\big(||z_k|| - d_k) + \hat z_k\tilde\mu_k.
\end{array}\right\} \label{eq: mis}
\end{equation}
The structure in (\ref{eq: mis}) allows us to write the complete control law $u$ in the following compact form
\begin{equation}
	u = -c \overline BD_{\hat z}e + \overline A(\mu, \tilde\mu)\hat z, \label{eq: umis}
\end{equation}
where $u\in\R^{m|\mathcal{V}|}$ is the stacked vector of control actions $u_i$, $c\in\R^+$ is a constant gain, $e\in\R^{|\mathcal{E}|}$ is the stacked vector of all the distance errors $e_k = ||z_k|| - sd_k$ where all the $sd_k$'s have been taken from $\mathcal{Z}_s$, the parameters $\mu\in\R^{|\mathcal{E}|}$ and $\tilde\mu\in\R^{|\mathcal{E}|}$ are the stacked vectors of $\mu_k$ and $\tilde\mu_k$ for all $k\in\{1,\dots,|\mathcal{E}|\}$ and the elements of $A$ are defined as follows
\begin{equation}
a_{ik} \dfb \begin{cases}\mu_k \quad \text{if} \quad i = {\mathcal{E}_k^{\text{tail}}} \\
		\tilde\mu_k \quad \text{if} \quad i = {\mathcal{E}_k^{\text{head}}} \\
		0 \quad \text{otherwise.}
	\end{cases} \label{eq: A}
\end{equation}
Note that the elements of $A$ are related to the incidence matrix $B$ because of (\ref{eq: mis}), and hence we still have a distributed control law. 

We can identify two terms at the right hand side of (\ref{eq: umis}). The first one is clearly related to the gradient distance-based controller and its purpose is to form and keep the prescribed shape given by $\mathcal{Z}_s$. The second term corresponds to the steady-state collective motion and changing induced by the parameters $\mu_k$ and $\tilde\mu_k$ and the actual shape of the formation given by the unit vectors in $\hat z$. We will see that in order to guarantee the stability of the system we will make use of the exponential convergence of the self-contained error system in the original gradient-based controller. By choosing $c$ in (\ref{eq: umis}) sufficiently large, we can make the gradient-based term dominate the second term. Therefore the team of agents will converge to the desired shape $\mathcal{Z}_s$, where $s$ can be time-varying i.e. we will scale the shape within $\mathcal{Z}_\mathcal{S}$, and the steady-state motion will be given by the parameters and the unit vectors in $z\in\mathcal{Z}_\mathcal{S}$.

\subsection{Design of the distributed motion and changing parameters}
Suppose that the formation is at the prescribed shape, i.e. $e=\mathbf{0}$. In this case if $\overline A(\mu, \tilde\mu)\hat z$ defines translations and rotations of the infinitesimally and minimally congruent rigid family $\mathcal{Z}_\mathcal{S}$, then the desired scaled shape $\mathcal{Z}_s$ will be invariant under such an additional control term. Note that from (\ref{eq: umis}) when $e=\mathbf{0}$ the control law for the agent $i$ becomes
\begin{equation}
	^bu_i = \sum_{k=1}^{|\mathcal{E}|}a_{ik}{^b\hat z}_k^*, \label{eq: uim}
\end{equation}
where $^b\hat z^*\in\mathcal{Z}_{\mathcal{S}}$. We recall that the elements $a_{ik}$ of $A$ are related to $\mu$ and $\tilde\mu$ as in (\ref{eq: A}). In an infinitesimally and minimally congruent rigid formation, the minimum number of neighbors for agent $i$ is two (resp. three) in 2D (resp. 3D) shapes with its corresponding $z_k^*$'s not being in non-generic degenerated configurations, e.g. all of them collinear (resp. coplanar), then $^bu_i$ can span the whole $\R^2$ (resp . $\R^3$). In other words, we can design a pair of arbitrary constant velocities $^bv_{p_c}^*$ and $^b\omega^*$ for the desired scale-free formation $\mathcal{Z}_{\mathcal{S}}$ by choosing appropriately $\mu$ and $\tilde\mu$. For choosing such $\mu$ and $\tilde\mu$, let us decompose them into $\mu = \mu_v + \mu_\omega + \mu_s$ and $\tilde\mu = \tilde\mu_v + \tilde\mu_\omega + \tilde\mu_s$, where each term in this decomposition can be used to define the translation, rotation and scaling of the group motion. Here, the subscript `$v$' refers to the motion parameters responsible for $^bv_{p_c}^*$, `$\omega$' refers to $^b\omega^*$ and finally `$s$' refers to the changing parameters which are responsible for dilating/contracting the shape within $\mathcal{Z}_{\mathcal{S}}$. As shown in \cite{MaJaCa15} the motion parameters of $\mu_v, \tilde\mu_v, \mu_\omega$ and $\tilde\mu_\omega$ can be determined by imposing restrictions on the dynamics of $z$ and $e$ in order to keep invariant $e$ at $e = \mathbf{0}$, i.e.
\begin{align}
	\overline B^T\overline A(\mu, \tilde\mu)\hat z = 0 \label{eq: Z} \\
	D_{\hat z}\overline B^T\overline A(\mu, \tilde\mu)\hat z = 0, \label{eq: E}
\end{align}
where (\ref{eq: Z}) stands for translations and (\ref{eq: E}) stands for rotations and translations. Let us write the following identity
\begin{equation}
	A(\mu, \tilde\mu)\hat z = \begin{bmatrix}\bar S_1 D_{\hat z} & \bar S_2D_{\hat z}\end{bmatrix}\begin{bmatrix}\mu \\ \tilde\mu\end{bmatrix} = T(\hat z)\begin{bmatrix}\mu \\ \tilde\mu\end{bmatrix}, \label{eq: T}
\end{equation}
where $S_1$ is constructed by setting all the $1$ elements in the incidence matrix $B$ to zero and $S_2 \dfb S_1 - B$. In order to compute the distributed motion parameters $\mu_v, \tilde\mu_v$ for the translational velocity $^bv_{p_c}^*$ we eliminate the components of $\mu$ and $\tilde\mu$ responsible of non-motion, i.e. $A(\mu, \tilde\mu)\hat z = 0$ in (\ref{eq: umis}), by projecting the kernel of $\overline B^TT(^b\hat z^*)$ (derived from (\ref{eq: Z})) over the orthogonal space of the kernel of (\ref{eq: T})
\begin{equation}
	\begin{bmatrix}\mu_v \\ \tilde\mu_v \end{bmatrix} \in \mathcal{\hat U}\dfb P_{\operatorname{Ker}\{T(^b\hat z^*)\}}^\bot\left\{\operatorname{Ker}\{\overline B^TT(^b\hat z^*)\} \right\}	\label{eq: U}
\end{equation}
where we have employed $^b\hat z^*$ in order to define the translational velocity of the desired formation with respect to $O_b$ as in Figure \ref{fig: 3dmotion}. In a similar way, by removing the components responsible of non-motion and translational velocity, the computation of the distributed motion parameters $\mu_\omega, \tilde\mu_\omega$ for the rotational motion of the desired shape is obtained from (\ref{eq: E}) and (\ref{eq: U}) as
\begin{equation}
	\begin{bmatrix}\mu_\omega \\ \tilde\mu_\omega \end{bmatrix} \in  \mathcal{\hat W}\dfb P_{\mathcal{\hat U}}^\bot\left\{\operatorname{Ker}\{D_{^b\hat z^*}^T\overline B^TT(^b\hat z^*)\}\right\}.
\label{eq: W}
\end{equation}
In order to compute the distributed changing parameters $\mu_s, \tilde\mu_s$ we need to look at the bearing rigidity matrix $R_B(z)$. It has been shown in \cite{zhao2014bearing} that the meaning of the kernel of the bearing rigidity matrix stands for translations and scalings of the desired shape. Therefore in a similar way as before the following condition
\begin{equation}
	\begin{bmatrix}\mu_s \\ \tilde\mu_s\end{bmatrix} \in \mathcal{S}\dfb P_{\mathcal{\hat U}}^\bot\left\{\operatorname{Ker}\{D^T_{P^{\perp}_{^b\hat z^*}}\overline B^TT(^b\hat z^*)\}\right\},
	\label{eq: S}
\end{equation}
will give us the space of changing parameters responsible for the scaling of the formation. We would like to remark that the presented motion and changing parameters have been designed for a family of infinitesimally and minimally congruent shapes $\mathcal{Z}_{\mathcal{S}}$, i.e. for a scale-free version of a desired infinitesimally and minimally rigid shape. Note that the three spaces $\mathcal{U}, \mathcal{W}$ and $\mathcal{S}$ have been computed in a centralized way while the parameters $\mu$ and $\tilde\mu$ are applied in a distributed fashion. This computation can be done \emph{off-line} during the design stage, e.g. at the same time that one designs the corresponding $d_k$ for the desired shape $\mathcal{Z}$. An example of the three spaces $\mathcal{U}, \mathcal{W}$ and $\mathcal{S}$ for an infinitesimally and minimally congruent regular squares is given in Figure \ref{fig: mis}.
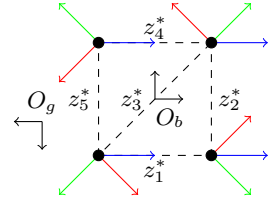
\begin{figure}
\centering
\begin{tikzpicture}[line join=round]
\draw[dashed](0,0)--(1.5,0)--(1.5,1.5)--(0,1.5)--(0,0)--(1.5,1.5);
\draw[draw=red,arrows=->](0,0)--(.525,-.525);
\draw[draw=red,arrows=->](1.5,0)--(2.025,.525);
\draw[draw=red,arrows=->](1.5,1.5)--(.975,2.025);
\draw[draw=red,arrows=->](0,1.5)--(-.525,.975);
\draw[draw=blue,arrows=->](0,0)--(.75,0);
\draw[draw=blue,arrows=->](1.5,0)--(2.25,0);
\draw[draw=blue,arrows=->](1.5,1.5)--(2.25,1.5);
\draw[draw=blue,arrows=->](0,1.5)--(.75,1.5);
\draw[draw=green,arrows=->](0,0)--(-.525,-.525);
\draw[draw=green,arrows=->](1.5,0)--(2.025,-.525);
\draw[draw=green,arrows=->](1.5,1.5)--(2.025,2.025);
\draw[draw=green,arrows=->](0,1.5)--(-.525,2.025);
\filldraw(0,0) circle (2pt);
\filldraw(1.5,0) circle (2pt);
\filldraw(1.5,1.5) circle (2pt);
\filldraw(0,1.5) circle (2pt);
\draw[draw=black,arrows=->](.75,.75)--(1.125,.75);
\draw[draw=black,arrows=->](.75,.75)--(.75,1.125);
\draw[draw=black,arrows=->](-.75,.45)--(-1.125,.45);
\draw[draw=black,arrows=->](-.75,.45)--(-.75,.075);
\node at (.95,.5) {\small $O_b$ \normalsize};\node at (-.75,.7) {\small $O_g$ \normalsize};\node at (.75,-.2) {\small $z_1^*$ \normalsize};\node at (1.75,.75) {\small $z_2^*$ \normalsize};\node at (.45,.75) {\small $z_3^*$ \normalsize};\node at (.75,1.7) {\small $z_4^*$ \normalsize};\node at (-.25,.75) {\small $z_5^*$ \normalsize};\end{tikzpicture}
\caption{The velocity of the agents at the desired shape is the linear combination of the unit vectors from their associated relative positions (black dashed lines) given by (\ref{eq: uim}). The common velocity $^bv_{p_c}^*$ of the centroid is marked in blue and it is given by $\mu_v$ and $\tilde\mu_v$. The rotational velocity about the centroid defining $^b\omega^*$ is marked in red and is given $\mu_\omega$ and $\tilde\mu_\omega$. The velocity responsible for scaling the formation has been marked in green and it is given by $\mu_s$ and $\tilde\mu_s$. Note that these velocities are constant with respect to $O_b$.}
\label{fig: mis}
\end{figure}

\subsection{Design of the controller for precise motion and changing of the formation}

Here by precise scaling we mean the control of agents such that the inter-distances follow a time varying $d(t)$ but with the shape in $\mathcal{Z}_\mathcal{S}$. More precisely, we set $e_k(t) = ||z_k(t)|| - d_k(t)$ for $k\in\{1, \dots, |\mathcal{E}|\}$ with $\mathcal{Z}\in\mathcal{Z}_\mathcal{S}$ in a family of infinitesimally and minimally congruent rigid shapes. For simplicity we set the following relation in the edge $\mathcal{E}_k$
\begin{equation}
d_k(t) = s(t)d^*_k + d^*_k,
\label{eq: dt}
\end{equation}
where $s(t)\in\R$ is a time varying scaling signal which is assumed to be at least $C^1$ and $d_k^*$ is defined for a particular $\mathcal{Z}$. We remark here that the form used in (\ref{eq: dt}) is for convenience of design. One can of course choose $d_k(t) = s(t)d^*_k$. Without loss of generality, we assume that $s(0) = 0$. Obviously, for well-posedness we also impose that $s(t)$ is defined properly such that $d_k(t) > 0$ for all $t$ and $k\in\{1, \dots, |\mathcal{E}|\}$. 

It is clear that the desired linear speed $||^bv_{p_c}^*||$ and the desired angular speed $||^b\omega^*||$ are related to the norms of $\mu_v, \tilde\mu_v$ and $\mu_\omega, \tilde\mu_\omega$ respectively. It can also be easily checked that the speed $\frac{\mathrm{d}}{\mathrm{dt}}d_k(t)$ is related to the norms of $\mu_s$ and $\tilde\mu_s$.

We derive the dynamics of $z$ and $e$ from (\ref{eq: umis}) but consider the time varying desired distances
\begin{align}
	\dot z &= -c\overline B^T\overline BD_{\hat z}e + \overline B^T\overline A(\mu,\tilde\mu)\hat z \\
	\dot e &= -cD^T_{\hat z}\overline B^T\overline BD_{\hat z}e + D^T_{\hat z}\overline B^T\overline A(\mu,\tilde\mu)\hat z - \dot d, \label{eq: es}
\end{align}
where we have rewritten $e$ as the stacked vector of $e_k(t) = ||z_k(t)|| - d_k(t)$ for $k\in\{1, \dots, |\mathcal{E}|\}$, and $d$ is the stacked vector of $d_k(t)$ also for $k\in\{1, \dots, |\mathcal{E}|\}$.

In a similar way as in (\ref{eq: Z}) and (\ref{eq: E}), in order to compensate $\dot d$ in (\ref{eq: es}) we impose the following condition for keeping invariant $e$ for $e=\mathbf{0}$, i.e. the formation shape is always in $\mathcal{Z}_s$
\begin{equation}
\dot d = D^T_{^b\hat z^*}\overline B^T\begin{bmatrix}\overline S_1D_{^b\hat z^*} & \overline S_2D_{^b\hat z^*} \end{bmatrix}\begin{bmatrix}\mu \\ \tilde\mu \end{bmatrix}, \label{eq: ddot}
\end{equation}
so that the last two terms of the right hand side of (\ref{eq: es}) is zero when $e=0$. Note that the solution to (\ref{eq: ddot}) for $\mu$ and $\tilde\mu$ includes the spaces $\mathcal{U}$ and $\mathcal{W}$. Therefore the distributed changing parameters that we are looking for scaling the desired shape with a desired scaling speed are $[\begin{smallmatrix}\mu_s \\ \tilde\mu_s\end{smallmatrix}]\in\mathcal{S}$ such that (\ref{eq: ddot}) holds.

For the constant growing case, i.e. $s(t) = st$, where $s\in\R^+$ is a common constant scaling speed among all the agents, we have that $\dot d = \frac{\mathrm{d}s}{\mathrm{d}t}d^* = sd^*$ and therefore the solution of (\ref{eq: ddot}) gives constant $\mu_s$ and $\tilde\mu_s$. 
Considering the periodic scaling case we have that $s(t) = s\sin(\omega t)$, which obviously satisfies $\dot d = \frac{\mathrm{d}s}{\mathrm{d}t}d^* = s \omega \cos(\omega t)d^*$. Therefore the changing parameters $\mu_s$ and $\tilde\mu_s$ for the periodic case are the same as we have seen previously calculated for the constant growing case but multiplied by the periodic signal $\omega \cos(\omega t)$, which is obviously independent of the actual shape.

\section{Stability analysis}
\label{sec: sta}
Before presenting the main result, we need to show first that the error system in (\ref{eq: es}) is an autonomous system. Indeed, the second term at the right hand side of (\ref{eq: es}) depends on the dot products of the form $\hat z_i^T\hat z_j$ for $i, j\in\{1, \dots, |\mathcal{E}|\}$. It has been shown in \cite{MouMorseBelSunAnd15} that all the scalar products $z^T_iz_j$ for $i, j\in\{1, \dots, |\mathcal{E}|\}$ can be written as smooth functions of the inter-distances $||z_k||$ for $k \in\{1, \dots, |\mathcal{E}|\}$. Since the errors $e_k$ for $k \in\{1, \dots, |\mathcal{E}|\}$ are functions of only the inter-distances $||z_k||$ and $\hat z_k$ for $k \in\{1, \dots, |\mathcal{E}|\}$, we have that
\begin{equation}
	\hat z_i^T\hat z_j = g_{ij}(e), \quad i, j\in\{1, \dots, |\mathcal{E}|\},
\end{equation}
where $g_{ij}$ is a local smooth function around the shape $z\in\mathcal{Z}_s$. Note that when $z\in\mathcal{Z}_s$, the second and third terms on the right hand side of (\ref{eq: es}) vanish because of (\ref{eq: ddot}), therefore we can write the following local function
\begin{equation}
	f(e) = D^T_{\hat z}\overline B^T\overline A(\mu,\tilde\mu)\hat z - \dot d, \quad f(\mathbf{0}) = \mathbf{0} \iff z\in\mathcal{Z}_s.
	\label{eq: fs}
\end{equation}
Employing the same argument, the matrix in the first term of the right hand side of (\ref{eq: es}) can be rewritten as
\begin{equation}
	Q(e) = D^T_{\hat z}\overline B^T\overline BD_{\hat z}, \label{eq: Q}
\end{equation}
where it has been shown in \cite{MaJaCa15} that $Q(\mathbf{0})$ with $z\in\mathcal{Z}_s$ is positive definite.

\begin{theorem}
\label{th: main}
Consider the distributed parameters $[\begin{smallmatrix}\mu_v \\ \tilde\mu_v\end{smallmatrix}]$, $[\begin{smallmatrix}\mu_\omega \\ \tilde\mu_\omega\end{smallmatrix}]$ and $[\begin{smallmatrix}\mu_s \\ \tilde\mu_s\end{smallmatrix}]$ belonging to the spaces (\ref{eq: U}), (\ref{eq: W}) and (\ref{eq: S}) respectively.
Then, there exist constants $\rho, c^*>0$ such that the origin of the system (\ref{eq: es}), corresponding to $z\in\mathcal{Z}_s$ with time-varying $s(t)$ as in (\ref{eq: dt}), is locally exponentially stable for all $c\geq c^*$ in the compact set $\mathcal{Q}\dfb\{e:||e||^2\leq \rho\}$. In particular, the formation will converge exponentially fast to the time-varying shape defined by $\mathcal{Z}_s$ with the speed $\frac{\mathrm{d}s(t)}{\mathrm{d}t}$ satisfying (\ref{eq: ddot}) and the agents' velocities
	\begin{equation}
		^b\dot p_i(t) \to {^b\dot p_i^*}, \, t\to\infty, \, i\in\{1, \dots,|\mathcal{V}|\},
	\end{equation}
where the ${^b\dot p_i^*}$'s are given by the desired $^b\dot p_c^*$, $^b\omega$ and $\frac{\mathrm{d}s(t)}{\mathrm{d}t}$, that are determined by $[\begin{smallmatrix}\mu_v \\ \tilde\mu_v\end{smallmatrix}]$, $[\begin{smallmatrix}\mu_\omega \\ \tilde\mu_\omega\end{smallmatrix}]$ and $[\begin{smallmatrix}\mu_s \\ \tilde\mu_s\end{smallmatrix}]$.
\end{theorem}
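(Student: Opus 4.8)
\emph{Proof strategy.} The plan is to recast the error dynamics (\ref{eq: es}) as a gain-dominated perturbation of the pure gradient flow and then close the argument with a Lyapunov estimate in which a large $c$ overpowers the motion/scaling term. First I would invoke the two preparatory facts established just above the statement: the matrix $Q(e)$ in (\ref{eq: Q}) satisfies $Q(\mathbf{0})\succ 0$ at $z\in\mathcal{Z}_s$ (as in \cite{MaJaCa15}), and the map $f(e)$ in (\ref{eq: fs}) is a smooth function of $e$ alone with $f(\mathbf{0})=\mathbf{0}$, the latter being exactly the content of the design condition (\ref{eq: ddot}). Using these, (\ref{eq: es}) becomes $\dot e = -cQ(e)e + f(e)$, with the origin corresponding to $z\in\mathcal{Z}_s$; the system is autonomous in the constant-growing case, and in the periodic case the perturbation only picks up a bounded time-dependent scalar factor $\omega\cos(\omega t)$ while still vanishing at $e=\mathbf{0}$.

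Next I would take the Lyapunov candidate $V(e)=\tfrac{1}{2}\|e\|^2$, which is the natural elastic energy of (\ref{eq: vkd}), and differentiate along trajectories to get $\dot V = -c\,e^TQ(e)e + e^Tf(e)$. By continuity of $Q(\cdot)$ together with $Q(\mathbf{0})\succ 0$, I would fix $\rho>0$ small enough that $Q(e)\succeq\lambda I$ for some $\lambda>0$ on the compact set $\mathcal{Q}=\{e:\|e\|^2\leq\rho\}$. Because $f$ is $C^1$ on $\mathcal{Q}$ and $f(\mathbf{0})=\mathbf{0}$, the mean value theorem supplies a Lipschitz bound $\|f(e)\|\leq L\|e\|$ on $\mathcal{Q}$. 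Combining these two estimates yields $\dot V \leq -(c\lambda - L)\|e\|^2 = -2(c\lambda - L)V$.

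Choosing any $c^* > L/\lambda$, for every $c\geq c^*$ the coefficient $c\lambda - L\geq c^*\lambda - L > 0$, so $\dot V<0$ on $\mathcal{Q}\setminus\{\mathbf{0}\}$; in particular $\mathcal{Q}$ is positively invariant (it is a sublevel ball of $V$ on whose boundary $\dot V<0$), so the comparison lemma gives $\|e(t)\|\leq e^{-(c\lambda - L)t}\|e(0)\|$, which is the claimed local exponential stability. For the velocity statement I would then observe that as $e\to\mathbf{0}$ we have $z\to\mathcal{Z}_s$, so the control (\ref{eq: umis}) converges to its second term $\overline A(\mu,\tilde\mu)\hat z$ evaluated on $\mathcal{Z}_\mathcal{S}$; by the constructions (\ref{eq: U}), (\ref{eq: W}) and (\ref{eq: S}) this term reproduces precisely the prescribed centroid translation $^bv_{p_c}^*$, rotation $^b\omega^*$ and scaling rate $\tfrac{\mathrm{d}s}{\mathrm{d}t}$, whence each $^b\dot p_i\to{^b\dot p_i^*}$ in the body frame $O_b$.

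The main obstacle is the uniform-estimate step: one must pick $\rho$ small enough that $Q(e)$ stays uniformly positive definite and $f$ stays uniformly Lipschitz on $\mathcal{Q}$ \emph{simultaneously}, and then select $c^*$ consistently so that the decay estimate holds throughout the invariant set rather than only pointwise at the origin. A secondary point requiring care is the time-varying case, where $\dot d$ and the scaling parameters carry the factor $\omega\cos(\omega t)$; there I would replace $L$ by $\omega L$ to keep a uniform-in-time bound $\|f(e,t)\|\leq (\omega L)\|e\|$, so that the same comparison argument still delivers exponential stability uniformly in $t$.
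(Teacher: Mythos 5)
Your proposal is correct and follows essentially the same route as the paper: the Lyapunov function $V=\tfrac{1}{2}\|e\|^2$, positive definiteness of $Q(e)$ on a small compact set $\mathcal{Q}$ by continuity from $Q(\mathbf{0})\succ 0$, a local Lipschitz bound on $f$ with $f(\mathbf{0})=\mathbf{0}$ guaranteed by (\ref{eq: ddot}), and the high-gain choice $c^*>L/\lambda$ (the paper's $q/\lambda_{\text{min}}$). The only cosmetic difference is that the paper closes the non-autonomous (periodic scaling) case by citing Khalil's Theorem 4.10, whereas you make the same step self-contained via a uniform-in-time Lipschitz bound and the comparison lemma.
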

\begin{proof}
Consider the following candidate Lyapunov function
\begin{equation}
	V = \frac{1}{2}||e||^2,
\end{equation}
whose time derivative satisfies
\begin{equation}
	\frac{\mathrm{d}V}{\mathrm{d}t} = e^T\dot e = -ce^TQ(e)e+e^Tf(e),
\end{equation}
	with $f(e)$ and $Q(e)$ as in (\ref{eq: fs}) and (\ref{eq: Q}) respectively. We have that $Q(\mathbf{0})$ is positive definite and that in a neighborhood of $\mathcal{Z}_s$ the formation is still infinitesimally and minimally rigid, therefore $Q(e)$ is positive definite in the compact set $\mathcal{Q}$ for some small positive $\rho$. Furthermore, $f(e)$ is locally Lipschitz in the compact set $\mathcal{Q}$ and $f(\mathbf{0}) = \mathbf{0}$ with $z\in\mathcal{Z}_s$, therefore there exists a constant $q\in\R^+$ such that
\begin{equation}
	\frac{\mathrm{d}V}{\mathrm{d}t} \leq (-c\lambda_{\text{min}}+q)||e||^2,
\end{equation}
	where $\lambda_{\text{min}}$ is the minimum eigenvalue of $Q(e)$ in $\mathcal{Q}$. Thus if one chooses $c\geq c^*>\frac{q}{\lambda_{\text{min}}}$, then the exponential stability of the origin of (\ref{eq: es}) follows from Theorem 4.10 from \cite{khalil1996nonlinear} for non-autonomous systems. Therefore we have that the formation shape converges exponentially to $\mathcal{Z}_s$.

	Now we substitute $e(t)\to \mathbf{0}$ and $\hat z(t) \to \mathcal{Z}_s$ as $t$ goes to infinity into (\ref{eq: umis}) and (\ref{eq: pdyn}), which gives us
\begin{equation}
	\dot p(t) - \bar A(\mu, \tilde\mu)\hat z(t) \to \mathbf{0}, \quad t\to\infty.
\end{equation}
In other words, the velocity of the formation converges exponentially fast to the desired velocity given as a superposition of $^bv^*_{p_c}$ and $^b\omega^*$  with the scaling speed $\frac{\mathrm{d}s(t)}{\mathrm{d}t}$ satisfying (\ref{eq: ddot}).
\end{proof}
\begin{remark}
The magnitude of the positive constant $\rho$ only depends on the desired shape, i.e. if one chooses a shape where all the agents are far away from each other and far away from a collinear (2D) or coplanar (3D) configuration, then one should expect a bigger $\rho$ for such desired shape than for the one that does not meet such requirements. In some sense $\rho$ is measuring (in a conservative way) how the desired formation can be distorted without falling into a degenerated configuration. Examples about how to compute $c^*$ and $q$ can be found in the PhD thesis \cite{thesis}.
\end{remark}

\section{Simulation results}
\label{sec: exp}
In this section \footnote{Video footage from actual mobile robots can be found at www.youtube.com/c/HectorGarciadeMarina with their explanations in \cite{thesis}.} we validate the correctness of Theorem \ref{th: main}. We have four agents with a scale-free regular square as the prescribed shape. The objective of this simulation is to design the distributed motion-changing parameters $\mu$ and $\tilde\mu$ in the control law (\ref{eq: umis}) such that the square spins around its centroid and at the same time we vary periodically the scale of the square. We define the sensing topology of the agents by
$	B = \bigg[\begin{smallmatrix}
		1 & 0 & -1 & 0 & -1 \\ -1& 1& 0& 0& 0 \\ 0& -1& 1& -1& 0 \\ 0& 0& 0& 1& 1
	\end{smallmatrix}\bigg]$
and define the regular square that we will periodically scale with side-length $d_1^* = 15$ pixels. In order to induce the spinning motion we design the following $\mu_\omega$ and $\tilde\mu_\omega$ satisfying (\ref{eq: W})
\begin{equation}
\mu_\omega = [\begin{smallmatrix}-w & -w & 0 & w & -w\end{smallmatrix}]^T, \,
\tilde\mu_\omega = [\begin{smallmatrix}-w & -w & 0 & w & -w\end{smallmatrix}]^T, \label{eq: misW}
\end{equation}
with $w = 1$. We want to vary periodically the size of the square following the desired time-varying distances
\begin{equation}
	d_i(t) = d_i^* + 2hd_i^*\sin(\omega_s t),\, i=\{1,\dots,5\},
\label{eq: dsqp}
\end{equation}
where one can deduce that $s(t) = 2h\sin(\omega_s t)$ and we set $h = 2$ and $\omega_s = 1.5$ rads/sec. The desired $\mu_s$ and $\tilde\mu_s$ satisfying (\ref{eq: S}) and (\ref{eq: ddot}) are
\begin{equation}
	\begin{smallmatrix}[\mu_s(t)^T & \tilde\mu_s(t)^T]\end{smallmatrix} = h\omega_s\cos(\omega_s t)[\begin{smallmatrix}1 & 1 & 0 & 1 & 1 & -1 & -1 & 0 & -1 & -1\end{smallmatrix}]. \label{eq: misS}
\end{equation}
Finally we choose $c = 5$ for (\ref{eq: umis}), which in numerical checking is much smaller than the conservative gain in Theorem \ref{th: main}. The numerical results are shown in Figure \ref{fig: per}.

\begin{figure}
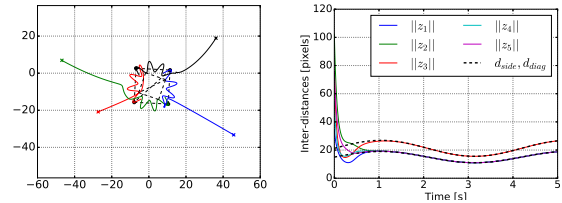

\centering
\includegraphics[width=0.215\textwidth]{./images/x}
\includegraphics[width=0.215\textwidth]{./images/d}
\caption{The left plot shows the trajectories of the agents with the `x' denoting the initial points. The dashed lines show two different scales of the prescribes square. The right plot shows the evolution of the inter-distances.}
\label{fig: per}
\end{figure}

\section{Conclusions}
In this paper we have modified the popular distance-based controller by adding distributed parameters at their prescribed inter-distances in order to control the steady-state motion while at the same time controlling precisely the scaling rate of the formation. This approach is compatible with higher order agent dynamics \cite{MaJaCa16} and is applicable to the target enclosing and tracking problem. For the periodic scaling, future work includes the addition of estimators based on the internal model principle in order not to require all the scaling signals $s(t)$ to have the same phase at the starting time.
\bibliographystyle{IEEEtran}
\bibliography{hector_ref}

\end{document}